\newtheorem{proposition}{Proposition}
\newtheorem{problem}{Problem}
\newtheorem{corollary}{Corollary}
\newtheorem{remark}{Remark}
\newtheorem{theorem}{Theorem}
\newtheorem{assumption}{Assumption}
\newcommand{\xnom}{x_{\text{n}}}
\newcommand{\unom}{u_{\text{n}}}
\newcommand{\ynom}{y_{\text{n}}}
\newcommand{\mrpi}{\mathbb{Z}_K}
\newcommand{\X}{\mathbb{X}}
\newcommand{\U}{\mathbb{U}}
\newcommand{\Xc}{\mathbb{X}_\text{c}}
\newcommand{\Uc}{\mathbb{U}_\text{c}}
\newcommand{\Xa}{\mathbb{X}_{\text{a},\bar{K}}}
\newcommand{\Xf}{X_{f,\bar{K}}^{\lambda}}
\newcommand{\RMPC}{R-MPC}
\newcommand{\RTMPC}{RT-MPC}
\newcommand{\ERTMPC}{ERT-MPC}
\newcommand\copyrighttext{%
  \footnotesize © 2024 IEEE.  Personal use of this material is permitted. Permission from IEEE must be obtained for all other uses, in any current or future media, including reprinting/republishing this material for advertising or promotional purposes, creating new collective works, for resale or redistribution to servers or lists, or reuse of any copyrighted component of this work in other works.}
\newcommand\copyrightnotice{%
\begin{tikzpicture}[overlay, remember picture]
\node[anchor=south,yshift=10pt] at (current page.south) {\fbox{\parbox{\dimexpr\textwidth-\fboxsep-\fboxrule\relax}{\copyrighttext}}};
\end{tikzpicture}%
}
\title{\LARGE \bf
Remote Tube-based MPC for Tracking Over Lossy Networks
}
\author{David Umsonst$^{\dagger}$ and Fernando S. Barbosa$^{\dagger}$
\thanks{\hspace{-1.05em}\footnotesize \textsuperscript{$\dagger$} Ericsson Research, Stockholm, Sweden.\newline
        {\tt\scriptsize\{david.umsonst, fernando.dos.santos.barbosa\}@ericsson.com}}%
}
\begin{document}

\maketitle
\thispagestyle{empty}
\pagestyle{empty}

\begin{abstract}
This paper addresses the problem of controlling constrained systems subject to disturbances in the case where controller and system are connected over a lossy network.
To do so, we propose a novel framework that splits the concept of tube-based model predictive control into two parts. 
One runs locally on the system and is responsible for disturbance rejection, while the other runs remotely and provides optimal input trajectories that satisfy the system's state and input constraints.
Key to our approach is the presence of a nominal model and an ancillary controller on the local system.
Theoretical guarantees regarding the recursive feasibility and the tracking capabilities in the presence of disturbances and packet losses in both directions are provided.
To test the efficacy of the proposed approach, we compare it to a state-of-the-art solution in the case of controlling a cartpole system. 
Extensive simulations are carried out with both linearized and nonlinear system dynamics, as well as different packet loss probabilities and disturbances.
The code for this work is available at \url{https://github.com/EricssonResearch/Robust-Tracking-MPC-over-Lossy-Networks}
\end{abstract}

\copyrightnotice
\section{Introduction}
\label{sec:introduction}

Wireless communication has evolved to enable higher and faster data transfer, with 5G being envisioned as being a key enabler of Industry 4.0 \cite{5g-acia_key_2020, 5g-smart_2022} and of mass digitalization.
Looking into control systems and robotics in general, faster and more reliable wireless communication enables plants and systems to be controlled remotely, utilizing edge and cloud computing, in a so-called offloaded control \cite{baxi_towards_2022}. 
Running heavy-processing components remotely allows industries to save costs with cabling and processing power in the plant, easier integration of autonomous mobile agents in the industrial floor, and also a reduced energy consumption on battery-powered agents.

However, any wireless network is subject to imperfections and constraints. 
The former means that it can present delays, packet drops, and even longer outages.
The latter implies that its resources, such as throughput and load, are constrained. 
These two factors are specially precarious for time- and safey-critical systems, such as unstable plants, mobile robots and autonomous cars \cite{WirelessNetworksSurvey}.

A popular approach to address the problem of stabilization under safety and actuator constraints is Model Predictive Control (MPC) \cite{MPCBook}, since such constraints can be explicitly accounted for in its formulation. 
Several approaches have been proposed to make MPC robust to network imperfection.
Looking specifically into the stabilization problem, \cite{SelfTriggeredMPCBoundedPacketLoss} considers a bound on the amount of consecutively lost packets, while \cite{NetworkedMPCWithUncertaintyAndBoundedDelay} considers bounded delay.
Moving to trajectory tracking problems, \cite{ReferenceTrackingWithStochasticMPCOverNetwork} assumes Bernoulli distributed packet loss, while \cite{RemoteMPCoverLossyNetworks} only assumes that from time to time there are consecutive successful packet deliveries from the plant to the controller and back.
In addition to network imperfections, \cite{NetworkedMPCWithUncertaintyAndBoundedDelay} considers a bounded disturbance and \cite{ReferenceTrackingWithStochasticMPCOverNetwork} considers an unbounded zero mean stochastic disturbance acting on the plant.

Extensive research has been carried out on MPC that disregards the effects of imperfect communication, either because the controller is running onboard or because perfect communication was assumed, but can handle local disturbances.
Limon \emph{et al.} \cite{RobustTubeBasedTrackingMPC} propose a robust tracking MPC that keeps the plant state in a bounded neighborhood of the nominal plant state, while tracking a constant reference. Here, the nominal plant represents the plant dynamics without a disturbance present.
Roque \emph{et al.}  \cite{CorridorMPC} combine control barrier function with the nominal system to guarantee that the continuous system is within a bounded neighborhood of the desired reference in between discrete controller updates.
Neither \cite{RobustTubeBasedTrackingMPC} nor \cite{CorridorMPC} can handle network imperfections.

In our work, we combine the mild network assumptions of \cite{RemoteMPCoverLossyNetworks} with the disturbance rejection of \cite{RobustTubeBasedTrackingMPC} to develop a novel remote tracking MPC framework.
\emph{This framework guarantees the satisfaction of state and actuator constraints in the presence of a local disturbance and a lossy network.} 
The key idea is to use a nominal model on the local plant to simulate the nominal plant state in case of packet losses. 
This nominal plant state allows us to reduce the bandwidth by sending only control input trajectories over the network and it is used in an ancillary controller to reject the disturbance.
This allows us to handle both packet losses and local disturbances.
Furthermore, the code for our approach is available online.\footnote{\url{https://github.com/EricssonResearch/Robust-Tracking-MPC-over-Lossy-Networks}}

\textit{Notation:}
Let $x\in\mathbb{R}^n$ and $A\in\mathbb{R}^{n\times m}$ be a real-valued $n$-dimensional column vector and matrix with $n$ rows and $m$ columns, respectively. 
The transpose of a vector $x$ and matrix $A$ are $x^\top$ and $A^\top$, respectively.
The spectral radius and matrix square root of a square matrix $A$ are denoted by $\rho(A)$ and  $A^{\frac{1}{2}}$, respectively.
The $n$ dimensional identity matrix is denoted by $I_n$, while $0$ denotes a scalar, vector, or matrix with zero elements of appropriate dimensions.
A symmetric and square positive (semi-)definite matrix $A$ is denoted by $A>0(A\geq 0)$ and we use $\|x\|_A^2=x^\top A x$.
For a set $\mathbb{P}$ and a matrix $A$ of appropriate dimension, we define $A\mathbb{P}=\lbrace A p\ |\ p\in\mathbb{P} \rbrace$. 
For two sets $\mathbb{P}$ and $\mathbb{Q}$, the Minkowski sum and the Pontryagin difference are denoted as $\mathbb{P} \oplus \mathbb{Q}$ and $\mathbb{P} \ominus \mathbb{Q}$, respectively.
The probability of an event $E$ is denoted by $\mathrm{Prob}(E)$.

\section{Problem Definition}
\label{sec:background}

A block diagram summarizing the components involved in our setup is presented in Figure~\ref{fig:Overview}. In what follows in this section, we describe such components and formulate the problem addressed in this paper.

\begin{figure}
    \vspace{2mm}
    \centering
    \includegraphics[trim=155 20 240 10,clip,width=0.7\linewidth]{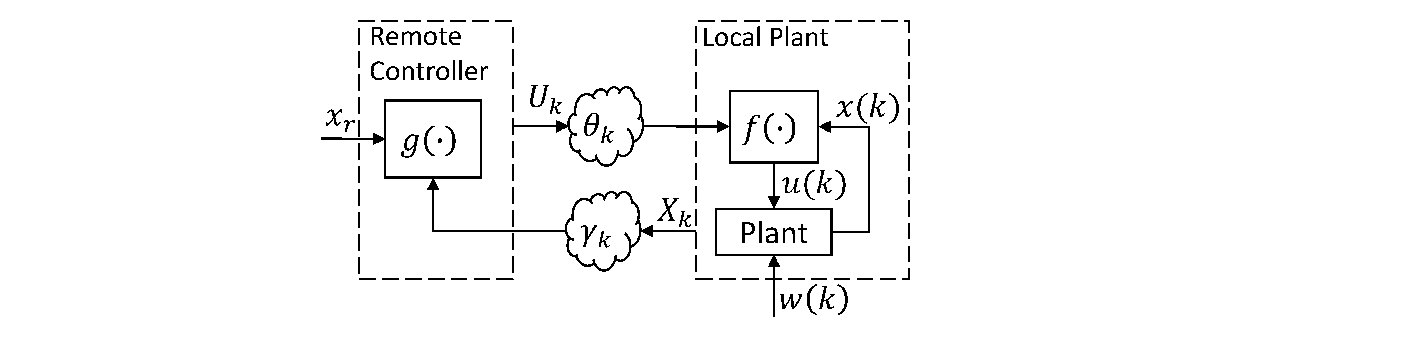}
    \caption{Block diagram of the problem setup}
    \label{fig:Overview}
\end{figure}

\subsubsection{Network} 
Local plant and remote controller communicate via a potentially \emph{lossy network}, in which network packets can be lost in both directions. Reasons for a lost packet include a large transmission delay, a packet drop in the network, reordering, or a short network outage.
To model these packet losses, we introduce two variables: $\theta_k$ and $\gamma_k$.
The variable $\theta_k\in\lbrace 0,1 \rbrace$ indicates whether the local plant has received the packet $U_k$ or not, i.e., $\theta_k=1$ if $U_k$, sent from the remote controller at time step $k$, has been received at the local plant, and $\theta_k=0$ otherwise.
Similarly, the variable $\gamma_k\in\lbrace 0,1 \rbrace$ indicates that the packet $X_k$ sent from the local plant has been received at the remote controller ($\gamma_k=1$) or not ($\gamma_k~=~0$).

\begin{assumption}
\label{assum:Network}
Over time, there is an infinite amount of two successful consecutive transmissions from plant to controller and controller to plant, i.e.,
\begin{equation}
    \mathrm{Prob}(\cap_{t\geq k}\lbrace \gamma_{t-1}\theta_t=0\rbrace) = 0\ \forall\ k\geq 0.
\end{equation}
\end{assumption}
This assumption is as in \cite{RemoteMPCoverLossyNetworks}, and does not put any major restrictions on the reasons for the packet loss, such as a fixed distribution or a maximum amount of lost packets in a row.

\subsubsection{Local plant}
Consider a linear time-invariant discrete-time plant with additive disturbance given by
\begin{equation}
    \label{eq:Plant}
    \begin{aligned}
        x(k+1) &= Ax(k) + Bu(k) + w(k),\\
        y(k) &= Cx(k),
    \end{aligned}
\end{equation}
where $x(k)\in\mathbb{R}^{n_x}$, $u(k) \in \mathbb{R}^{n_u}$, $y(k)\in\mathbb{R}^{n_y}$ and ${w(k)\in\mathbb{R}^{n_x}}$ are the plant's state, control input, output, and disturbance at time step $k\in\mathbb{N}_{\geq 0}$, respectively.
Here, ${A\in\mathbb{R}^{n_x \times n_x}}$, ${B\in\mathbb{R}^{n_x \times n_u}}$, and $C\in\mathbb{R}^{n_y \times n_x}$ are the system, input, and output matrices, respectively.

\begin{assumption}
\label{assum:StabilizableSystem}
The system $(A,B)$ is stabilizable.
\end{assumption}
This assumption is necessary to be able to design a controller that stabilizes the plant \eqref{eq:Plant}.

\begin{assumption}
    \label{assum:PolytopeDisturbance}
    The disturbance is bounded by a compact set $\mathbb{W}$, such that $w(k)\in\mathbb{W}$ for all $k$, where 
    \begin{equation}
        \mathbb{W} = \lbrace w\in\mathbb{R}^{n_x}\ |\ H_w w \leq h_w \rbrace,
    \end{equation}
    and $\mathbb{W}$ contains the origin in its interior.
\end{assumption}
This assumption confines the disturbance to a bounded set, which could, for example, depend on the modelling errors.

Furthermore, we also consider constraints in state $x(k)\in \mathbb{X}$ and input $u(k)\in \mathbb{U}$. These sets indicate, for example, safe set of states in which the plant should evolve, and actuator saturation. 
If $x(k)\in \mathbb{X}$ and $u(k)\in \mathbb{U}$, $x(k)$ and $u(k)$ are called \emph{admissible}.
\begin{assumption}
    \label{assum:StateAndInputSet}
    The sets $\X$ and $\U$ are bounded sets containing the origin in their interior and are defined as
    \begin{align}
        \X &= \lbrace x\in\mathbb{R}^{n_x}\ |\ H_x x \leq h_x \rbrace,\\
        \U &= \lbrace u\in\mathbb{R}^{n_u}\ |\ H_u u \leq h_u \rbrace.
    \end{align}
\end{assumption}

The control input is determined as ${u(k)=f(x(k),\lbrace U_i\rbrace_{i=0}^k, \lbrace\theta_i\rbrace_{i=0}^k)}$, where $\lbrace U_i\rbrace_{i=0}^k$ and $\lbrace\theta_i\rbrace_{i=0}^k$ are the sequence of packets sent from the remote controller to the local plant and the binary sequence indicating the successful transmission of them, respectively. Note that this function can make use of all previously received packets.

\subsubsection{Remote Controller}
The remote controller is used to determine the controller packet $U_k$ based on the received packets $X_i$ and the desired reference $x_r$.
More formally the controller is defined as $g(\lbrace U_i\rbrace_{i=0}^{k-1},\lbrace X_i\rbrace_{i=0}^{k-1}, \lbrace\gamma_i\rbrace_{i=0}^{k-1},x_r)$, which has access to all previous controller packets and can make use of all previously received plant packets. Here, $\lbrace X_i\rbrace_{i=0}^{k-1}$ and $ \lbrace\gamma_i\rbrace_{i=0}^{k-1}$ are defined similarly as $\lbrace U_i\rbrace_{i=0}^k$ and $\lbrace\theta_i\rbrace_{i=0}^k$ above.

\subsubsection{Problem Formulation} 
Now that all the components are defined, let us formulate the problem we want to solve.
\begin{problem}
\label{prob:TrackingOverLossyNetworkWithLocalDisturbance}
Given a local plant \eqref{eq:Plant}, design $f(\cdot)$ and $g(\cdot)$ such that i) state and input constraints are respected, i.e. $x(k)\in\mathbb{X}$ and $u(k)\in\mathbb{U}$ for $k\geq 0$, and ii) $x(k)$ converges to a bounded neighborhood of reference $x_r(k) \in \mathbb{R}^{n_x}$, despite the lossy network and the disturbance $w(k)$.
\end{problem}

\section{Preliminaries}
\label{sec:Preliminaries}

In the previous section, we have set up our problem and and now we will present several preliminaries, found, e.g., in \cite{RemoteMPCoverLossyNetworks, RobustTubeBasedTrackingMPC, RobustTubeBasedRegulatorMPC}, necessary for our proposed approach.
This section introduces the \emph{nominal plant} dynamics, i.e. the plant dynamics without an additive disturbance, the error between the actual and the nominal plant state, as well as the steady-state behaviour of the nominal plant.

\subsubsection{Nominal plant} 
The nominal plant \cite{RobustTubeBasedRegulatorMPC} is given by 
\begin{equation}
    \label{eq:NominalPlant}
    \begin{aligned}
        \xnom(k+1) &= A\xnom(k) + B\unom(k),\\
        \ynom(k) &= C\xnom(k),
    \end{aligned}
\end{equation}
where $\xnom(k)\in\mathbb{R}^{n_x}$, $\unom(k)\in \mathbb{R}^{n_u}$, and  $\ynom(k)\in \mathbb{R}^{n_y}$ are the nominal state, the nominal control input, and the nominal output, respectively.
Due to the disturbance $w(k)$ in \eqref{eq:Plant}, the plant state differs from the nominal state and subsequently we want to show how close the plant state is to the nominal state. To do so, we introduce the error $e(k)=x(k)-\xnom(k)$.

If $A$ is unstable, then the error will diverge such that the plant state is not close to the nominal state. To prevent that, we introduce an \emph{ancillary controller}, which will be used by the plant to track the nominal state. The ancillary controller is given by
\begin{equation}
    \label{eq:AncillaryController}
    u(k) = \unom(k) - K\left(x(k)-\xnom(k)\right),
\end{equation}
where $K\in\mathbb{R}^{n_u\times n_x}$ is a linear state feedback controller chosen such that $\rho(A-BK)<1$, which is possible due to Assumption~\ref{assum:StabilizableSystem}. Note that if the system matrix $A$ is stable, i.e., $\rho(A)<1$, then we could choose $K=0$.

When the plant uses the ancillary controller \eqref{eq:AncillaryController}, we obtain the following error dynamics
\begin{equation}
    e(k+1) = (A-BK)e(k)+w(k).
\end{equation}
The evolution of $e(k)$ is bounded, because $\mathbb{W}$ is a compact set and $A-BK$ is stable \cite{OutputAdmissibleSet}.

We introduce the minimal robust positively invariant set \cite{ApproxmiationOfMininamlRobustPositivelyInvariantSet} to determine the bounded set in which  $e(k)$ evolves as
\begin{equation}
    \mrpi = \bigoplus_{i=0}^{\infty}(A-BK)^i\mathbb{W}.
\end{equation}
It is guaranteed that $(A-BK)\mrpi\oplus\mathbb{W}\subseteq\mrpi$, i.e., if ${e(k_0)\in\mrpi}$, then $e(k)\in\mrpi$ for all $k>k_0$.
Since $0\in \mathbb{W}$, we have $0\in\mrpi$ \cite{ApproxmiationOfMininamlRobustPositivelyInvariantSet}.
The set $\mrpi$ can be overapproximated with, for example, the methods proposed in \cite{ApproxmiationOfMininamlRobustPositivelyInvariantSet} and \cite{RPI_Darup}.

With $\mrpi$ defined, it is known that \cite{RobustTubeBasedRegulatorMPC} 
\begin{equation}
    x(k) \in \lbrace \xnom(k)\rbrace \oplus \mrpi\ \forall k>0,
\end{equation}
given that $x(0)\in \lbrace \xnom(0)\rbrace \oplus \mrpi$
This means that the plant state evolves in a bounded neighborhood $\mrpi$ around the nominal state. This bounded neighborhood is often called a \emph{tube}.
The size of $\mrpi$ depends on the ancillary controller $K$, so that the ancillary controller determines how close the plant state will track the nominal state.
Similarly, we obtain
\begin{equation}
u(k) \in \lbrace \unom(k)\rbrace \oplus (-K)\mrpi,
\end{equation}
which means that the control input also evolves in a bounded neighborhood around the nominal control input.

Therefore, we will introduce tightened constraint sets \cite{RobustTubeBasedRegulatorMPC} in which the nominal state and input trajectory should evolve, i.e., ${\xnom(k)\in\Xc}$ and ${\unom(k)\in\Uc}$, which guarantee that the plant state and input trajectories evolve in the sets $\X$ and $\U$, respectively.
We define the \emph{tightened sets} $\Xc = \X \ominus \mrpi$ and ${\Uc = \U \ominus (-K)\mrpi}$, which guarantee that $\Xc \oplus \mrpi \subseteq \X$ and $\Uc \oplus (-K)\mrpi \subseteq \U$.

\subsubsection{Steady-state behavior} 
Next, we look into the steady-states of the nominal plant \cite{RobustTubeBasedTrackingMPC, RemoteMPCoverLossyNetworks} and how to control the nominal plant towards a steady state while guaranteeing that the nominal state and input remain in $\Xc$ and $\Uc$, respectively.

The steady-state equations of \eqref{eq:NominalPlant} are given by
\begin{align}
\label{eq:SteadyStateEquation}
\begin{bmatrix}
A-I_{n_x} & B
\end{bmatrix}\begin{bmatrix}
\bar{x} \\ \bar{u} 
\end{bmatrix}=0,
\end{align} 
which have a solution due to Assumption~\ref{assum:StabilizableSystem}. Here, $\bar{x}~\in~\mathbb{R}^{n_x}$ and $\bar{u}\in\mathbb{R}^{n_u}$ are a steady state and steady-state input, respectively.
To control the nominal system towards the steady state, we introduce the state feedback controller ${\bar{K}\in\mathbb{R}^{n_x \times n_u}}$ for the nominal plant
\begin{equation}
\label{eq:nominalsteadystatecontroller}
\unom(k) = \bar{u} - \bar{K}(\xnom(k) - \bar{x}),
\end{equation}
where $\bar{K}$ is chosen such that $\rho(A-B\bar{K})<1$.
However, we want to guarantee that $\xnom\in\Xc$ and $\unom\in\Uc$. 
Thus, we define the augmented state ${x_a(k)=[\xnom^\top(k),\ \bar{x}^\top,\ \bar{u}^\top]^\top}$ and its dynamics with the controller in \eqref{eq:nominalsteadystatecontroller} are given by
\begin{align*}
x_\text{a}(k+1)= A_\text{a} x_\text{a}(k)\ \mathrm{with}\ A_\text{a}=\begin{bmatrix}
A-B\bar{K} & B\bar{K} & B\\
0 & I_{n_x} & 0\\
0 & 0 & I_{n_u}
\end{bmatrix}.
\end{align*}
Next, we define the maximum admissible set \cite{OutputAdmissibleSet}
\begin{equation}
X_{f,\bar{K}} = \lbrace x_\text{a}\ |\ A_\text{a}^k x_\text{a} \in \Xa\ \forall\,k\in\mathbb{N}_{\geq 0}\rbrace,
\end{equation}
where ${\Xa = \lbrace x_\text{a}| \xnom\in\Xc, \bar{u} - \bar{K}(\xnom(k) - \bar{x})\in\Uc \rbrace}$.
If ${[\xnom(0)^\top, \bar{x}^\top, \bar{u}^\top]^\top \in X_{f,\bar{K}}}$, then the nominal plant \eqref{eq:NominalPlant} using the control law \eqref{eq:nominalsteadystatecontroller} guarantees that ${[\xnom(k)^\top, \bar{x}^\top, \bar{u}^\top]^\top \in X_{f,\bar{K}}}$ for all $k>0$ and that $\xnom(k)$ converges to the steady state $\bar{x}$.
We can compute $X_{f,\bar{K}}$ as described in \cite{OutputAdmissibleSet}. 
Since $X_{f,\bar{K}}$ might not be finitely determined, i.e., the polytope $X_{f,\bar{K}}$ cannot be described by a finite amount of inequalities, we introduce
\begin{equation}
\Xf = X_{f,\bar{K}} \cap \lbrace \bar{x},\ \bar{u}\ |\ \bar{x}\in\lambda \Xc,\ \bar{u} \in \lambda\Uc \rbrace,
\end{equation}
with $\lambda\in(0,1)$. 
This is a finitely determined set that approximates $X_{f,\bar{K}}$ arbitrarily well as $\lambda\rightarrow 1$ \cite{OutputAdmissibleSet}.

\section{Remote Tube-Based Tracking MPC\\ over Lossy Networks}
\label{sec:MPCdesign}
In this section, we describe in more details the \emph{Remote Tube-based Tracking MPC over Lossy Networks} approach that we propose to solve Problem \ref{prob:TrackingOverLossyNetworkWithLocalDisturbance} and its theoretical guarantees. 
As mentioned earlier, the proposed approach is an extension of those presented in \cite{RemoteMPCoverLossyNetworks} and \cite{RobustTubeBasedTrackingMPC} that enables remote tracking of references even in the presence of disturbance on the plant and lossy networks.

Figure~\ref{fig:RemoteTubeTrackingMPCLayout} presents the architecture of our proposed approach.
It is composed of five parts: two are placed remotely representing $g(\cdot)$, namely the MPC controller and the state estimator, and three are placed together with the local plant representing $f(\cdot)$, namely the consistent actuator, the nominal plant, and the ancillary controller.

\begin{figure}
    \centering
    \includegraphics[trim=105 8 115 0,clip,width=\linewidth]{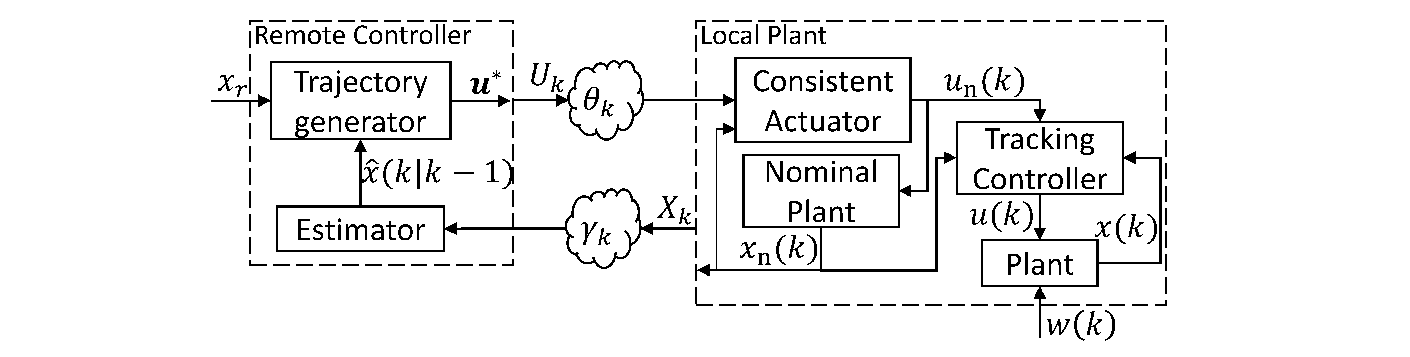}
    \caption{The block diagram of our proposed remote tube-based tracking MPC approach over lossy networks.}
    \label{fig:RemoteTubeTrackingMPCLayout}
\end{figure}

\subsection{Remote Model Predictive Controller For Tracking}
\label{sec:ProposedRemoteMPC}
To track the reference $x_r$, we will use a model predictive controller on the remote controller-side, which is inspired by \cite{RemoteMPCoverLossyNetworks}.
The cost function optimized in the MPC is given by
\begin{equation}
    c(\mathbf{u},\mathbf{x},\bar{x},\bar{u} ,x_r) = \sum_{i=0}^{N-1}\left(c_i(\mathbf{u},\mathbf{x},\bar{x},\bar{u})\right)+\bar{c}(\mathbf{x},\bar{x},x_r),
\end{equation}
where $\mathbf{u}=\lbrace\mathbf{u}(0),\ldots,\mathbf{u}(N)\rbrace$, $\mathbf{x}=\lbrace\mathbf{x}(0),\ldots,\mathbf{x}(N)\rbrace$,
\begin{align}
    c_i(\mathbf{u},\mathbf{x},\bar{x},\bar{u})&=\|\mathbf{x}(i)-\bar{x}\|_Q^2+\|\mathbf{u}(i)-\bar{u}\|_R^2,\\
    \bar{c}(\mathbf{x},\bar{x},x_r) &= \|\mathbf{x}(N)-\bar{x}\|_P^2+\|\bar{x}-x_r\|_T^2,
\end{align}
and $Q\geq 0$, $R>0$, and $T>0$ are the symmetric cost matrices for the state, input, and the tracking output, and $P$ is the solution of $P = (A-B\bar{K})^\top P(A-B\bar{K})+Q+\bar{K}^\top R \bar{K}.$
Given the cost function, a state estimate $\hat{x}(k|k-1)$ and a reference signal $x_r$, the optimization problem of the MPC is formulated as follows
\begin{subequations}
\label{eq:RemoteRobustTubeTrackingMPC}
\begin{align}
        \min_{\mathbf{u},\bar{x},\bar{u}}\ &c(\mathbf{u},\mathbf{x},\bar{x},\bar{u} ,x_r)\\
        \mathrm{s.t.}\ &\mathbf{x}(i+1) = A\mathbf{x}(i) + B\mathbf{u}(i),\\
        	 &\mathbf{x}(i) \in \Xc,\ \mathbf{u}(i) \in \Uc,\ 
        i\in\lbrace 0,\ldots,N-1\rbrace,\\
        	 &\mathbf{x}(0) = \hat{x}(k|k-1), \label{eq:MPCInitCondition}\\
        	 &(\mathbf{x}(N),\,\bar{x},\,\bar{u}) \in \Xf,\\
        	 &\begin{bmatrix}
        		A-I_{n_x} & B\\
        		\end{bmatrix}\begin{bmatrix}
        \bar{x} \\ \bar{u}
        \end{bmatrix}=0
\end{align}
\end{subequations}
where the sets $\Xc$, $\Uc$, and $\Xf$, are as in Section~\ref{sec:Preliminaries}.

Compared to the remote MPC formulated in \cite{RemoteMPCoverLossyNetworks}, the MPC \eqref{eq:RemoteRobustTubeTrackingMPC} generates trajectories for the nominal plant by using the tightened sets $\Xc$, $\Uc$, and $\Xf$. This difference is inspired by \cite{RobustTubeBasedTrackingMPC} and we make use of it in Section~\ref{sec:NominalPlantAndAncillaryController} to generate inputs $u(k)\in\U$, which guarantee $x(k)\in\X$ for all ${k\in \mathbb{N}_{\geq 0}}$.
\emph{Since the communication from the plant to the remote controller is lossy, we are not guaranteed to have the plant state $x(k)$ available at time $k$.} 
Therefore, we use the estimate ${\hat{x}(k|k-1)}$ based on the previously received packets as in \cite{RemoteMPCoverLossyNetworks} (see Section~\ref{sec:Estimator}) instead of the true state.

Let the optimal solution of \eqref{eq:RemoteRobustTubeTrackingMPC} at time step $k$ be $\mathbf{u}_k^*, \bar{u}_k^*$, and $\bar{x}_k^*$. With that, the packet $U_k$ is, similar to \cite{RemoteMPCoverLossyNetworks}, constructed as follows
\begin{equation}
    \label{eq:Uk}
    U_k = \lbrace \mathbf{u}_k^*, \bar{u}_k^*+\bar{K}\bar{x}_k^*, q_k\rbrace,
\end{equation}
where $q_k$ is the time instance when the remote estimator has last received a packet from the local plant. The packet contains the optimal nominal input trajectory at time step $k$ and the steady-state control input for the nominal plant.

\subsection{Consistent Actuator}
The consistent actuator is located at the local plant and is responsible for deciding the next nominal control input $\unom(k)$. 
It has the same functionality as the Smart Actuator in \cite{RemoteMPCoverLossyNetworks}.
When a packet $U_k$ is received, the consistent actuator needs to decide if $U_k$ will be used or if it will be discarded; in the latter, the packet already in use continues to be applied.

The consistent actuator might discard a received packet because the estimated state on the remote controller side is inconsistent with the actual state on the plant. 
This means that the control inputs have been calculated based on an incorrectly estimated state.
To determine consistency, we use a variable $\Theta_k$ as in \cite{RemoteMPCoverLossyNetworks}, which is calculated as follows
\begin{align}
    \Theta_k = \begin{cases}
        \prod_{i=q_k+1}^k\theta_i &\mathrm{if}\ \theta_k=1,\\
        0 &\mathrm{otherwise.}
    \end{cases}
\end{align}

We observe that if $\theta_k=1$, i.e., the packet is received at time step $k$, then we can calculate the product and otherwise ${\Theta_k=0}$.
Once $\Theta_k$ is determined the consistent actuator updates its internal state $s_k$ as follows
\begin{equation}
    s_k = \Theta_k k + (1-\Theta_k)s_{k-1}.
\end{equation}
This internal state keeps track of which packet $U_{s_k}$ should be used by the consistent actuator at time step $k$.
Note that if $\Theta_k=1$ then $s_k=k$ and the latest packet $U_k$ will be used.
Once $s_k$ has been determined, the packet $X_k$ is sent from the plant to the controller with the following content,
\begin{equation}
    \label{eq:Xk}
    X_k = \lbrace \xnom(k), s_k \rbrace.
\end{equation}

While in \cite{RemoteMPCoverLossyNetworks} the packet $X_k$ contains $x(k)$ and $s_k$, our proposed solution sends the nominal state $\xnom(k)$ to the remote controller, which is obtained as described in Section~\ref{sec:NominalPlantAndAncillaryController}. 

The consistent actuator determines $\unom(k)$ as
\begin{align}
    \label{eq:SmartActuator}
    \unom(k) = \begin{cases}
        \mathbf{u}_{s_k}^*(k-s_k) &\mathrm{if}\ k-s_k<N,\\
         \bar{u}_{s_k}^*+\bar{K}\bar{x}_{s_k}^*-\bar{K}\xnom(k) &\mathrm{otherwise}.
    \end{cases}
\end{align}
In a nutshell, the consistent actuator uses all predicted control inputs in a packet $U_k$ if no new consistent packet has been received and once there are no more predicted inputs available it uses the controller $\bar{K}$ in \eqref{eq:nominalsteadystatecontroller} to control the nominal plant around the steady state $\bar{x}_{s_k}^*$.

Note that since $\unom(k)$ is determined from an optimal trajectory coming from the MPC, it is guaranteed that ${\unom(k)\in\Uc}$.

\subsection{Nominal Plant and Ancillary Controller}
\label{sec:NominalPlantAndAncillaryController}
The main idea of our proposed approach is that a model of the nominal plant runs on the local plant to determine the nominal plant state $\xnom(k)$. Here, $\xnom(k)$ together with $\unom(k)$ coming from \eqref{eq:SmartActuator} are used to determine the control input $u(k)$ for the plant via the  ancillary controller $K$ in \eqref{eq:AncillaryController}. 

The nominal control input coming from the consistent actuator is then applied to the model of the nominal plant, which evolves as described in \eqref{eq:NominalPlant}.
Since the nominal control inputs are determined by the MPC problem, they guarantee that $\xnom(k)\in\Xc$ for all $k\geq 0$.

As described in Section~\ref{sec:Preliminaries}, the ancillary controller will guarantee that $x(k)\in \lbrace\xnom(k)\rbrace \oplus \mrpi \subseteq \X$ for all $k\in\mathbb{N}_{\geq 0}$ and $u(k)\in \lbrace \unom(k)\rbrace \oplus K\mrpi \subseteq \U$ if $\xnom(k)\in\Xc$ for all $k\in\mathbb{N}_{\geq 0}$, since $\unom(k)\in\Uc$ for all $k\in \mathbb{N}_{\geq 0}$.

\textit{The nominal plant and ancillary controller on the local plant are the key to make our approach work} because they enable us to track a reference $x_r$ in the presence of a disturbance $w(k)$, and they are the main architectural difference to \cite{RemoteMPCoverLossyNetworks}.
Furthermore, running a nominal model is computationally cheaper than running a robust MPC as in \cite{RobustTubeBasedTrackingMPC} on the local plant. This makes our proposed approach more applicable to lightweight devices controlled over a lossy network without sacrificing robustness.

\begin{remark}
The ancillary controller $K$ and the steady-state controller $\bar{K}$ are not necessarily the same. This enables us to tune $K$ to minimize $\mrpi$, while $\bar{K}$ can be tuned to increase the size of $\Xf$. For the former, Section~7 in \cite{RobustTubeBasedTrackingMPC} proposed a semi-definite program to design $K$, which minimizes $\mrpi$ while guaranteeing that $\Xc$ and $\Uc$ are non-empty.
For the latter, a common choice in the literature is to choose $\bar{K}$ as the optimal LQR gain.
\end{remark}

\subsection{Estimator}
\label{sec:Estimator}
The estimator, similar to \cite{RemoteMPCoverLossyNetworks}, is used to estimate the state of the nominal plant at time step $k+1$ as $\hat{x}(k+1|k)$.
Based on the reception of $X_k$, it estimates the nominal plant state 
\begin{equation}
\label{eq:Estimator}
    \hat{x}(k+1|k) = A\hat{x}(k|k)+B\hat{u}(k|k),
\end{equation}
where 
\begin{align}
    \hat{x}(k|k) &= \gamma_k \xnom(k) + (1-\gamma_k)\hat{x}(k|k-1),\\
    \hat{u}(k|k) &= \gamma_k \unom(k) + (1-\gamma_k)\mathbf{u}_k^*(0).
\end{align}
Since only $\xnom(k)$ and $s_k$ are sent to the remote controller, the remote controller also needs to run a consistent actuator \eqref{eq:SmartActuator} to determine $\unom(k)$.
Furthermore, $q_k$ is updated as follows
\begin{equation}
    q_{k+1}=\gamma_k k + (1-\gamma_k) q_{k}
\end{equation}
to keep track of which packet $X_k$ has been received last at the remote controller.

Other than in \cite{RemoteMPCoverLossyNetworks}, we estimate the nominal plant state in the estimator and not the plant state. 
This guarantees that $\hat{x}(k|k-1)\in \Xc$, such that the constraints $\mathbf{x}(0)=\hat{x}(k|k-1)$ and $\mathbf{x}(0)\in\Xc$ in the optimization problem \eqref{eq:RemoteRobustTubeTrackingMPC} will not lead to an infeasible optimization problem.

\subsection{Theoretical Guarantees}
\label{sec:TheoreticalGuarantees}
In this section, we provide theoretical guarantees for our proposed MPC. The key insight for our theoretical guarantees is that the closed-loop system involving the MPC in Figure~\ref{fig:RemoteTubeTrackingMPCLayout} acts on the nominal plant and not the plant itself. This means that inside this closed-loop system there is no disturbance, such that it represents the disturbance-free system assumed in \cite{RemoteMPCoverLossyNetworks}.
Hence, the theoretical guarantees of \cite{RemoteMPCoverLossyNetworks} will hold for the closed-loop system involving the MPC in our proposed approach given Assumption~\ref{assum:RemoteTrackingMPC} below.
\begin{assumption}
\label{assum:RemoteTrackingMPC}
In addition to Assumptions~\ref{assum:Network} -- \ref{assum:StateAndInputSet}, the following conditions hold:
\begin{enumerate}
\item $Q$, $R$, and $T$ are positive definite. 
\item The system $(Q^{\frac{1}{2}},A)$ is observable.
\item The gains $K$ and $\bar{K}$ are such that $\rho(A-BK)<1$ and $\rho(A-B\bar{K})<1$, respectively. 
\item The matrix $P$ satisfies

$P=(A-B\bar{K})^\top P(A-B\bar{K})+Q+\bar{K}^\top R \bar{K}$.
\end{enumerate}
\end{assumption}

We begin by showing that the plant state is in a bounded neighbourhood around the estimated state if $\Theta_k=1$.
\begin{proposition}
If $\Theta_k=1$, then $x(k)\in \lbrace \hat{x}(k|k-1) \rbrace \oplus \mrpi$.
\end{proposition}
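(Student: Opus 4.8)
The plan is to split the claim into a deterministic synchronization fact and the tube inclusion, and then combine them: I will argue (i) that $\Theta_k=1$ forces $\hat{x}(k|k-1)=\xnom(k)$, and (ii) that the plant state always satisfies $x(k)\in\{\xnom(k)\}\oplus\mrpi$. Fact (ii) is just the tube property recalled in Section~\ref{sec:Preliminaries}: because the plant runs the ancillary controller \eqref{eq:AncillaryController} with $\rho(A-BK)<1$ and the error is initialized in $\mrpi$, the error dynamics $e(k+1)=(A-BK)e(k)+w(k)$ keep $e(k)=x(k)-\xnom(k)\in\mrpi$ for all $k$. Granting (i) and (ii), substituting $\xnom(k)=\hat{x}(k|k-1)$ into the tube inclusion yields $x(k)\in\{\hat{x}(k|k-1)\}\oplus\mrpi$, which is the assertion.

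For (i) I would exploit the observation opening Section~\ref{sec:TheoreticalGuarantees}: the subsystem formed by the remote MPC, the estimator \eqref{eq:Estimator}, the consistent actuator \eqref{eq:SmartActuator}, and the nominal plant \eqref{eq:NominalPlant} is disturbance-free and hence is exactly the networked setup of \cite{RemoteMPCoverLossyNetworks}, with $\xnom$ in the role played there by the plant state. The synchronization statement $\Theta_k=1\Rightarrow\hat{x}(k|k-1)=\xnom(k)$ is then the direct analogue of the corresponding result in \cite{RemoteMPCoverLossyNetworks} and carries over. If instead one reproves it directly, the idea is that $\Theta_k=1$ means $\theta_{q_k+1}=\dots=\theta_k=1$, while the last received plant packet fixes $\hat{x}(q_k|q_k)=\xnom(q_k)$ through the correction step with $\gamma_{q_k}=1$; over the window $[q_k,k]$ both the estimator and the plant's nominal model then run the same deterministic recursion, and the consistency chain forces the consistent actuator to apply precisely the inputs $\mathbf{u}_i^*(0)$ that the estimator assumes when $\gamma_i=0$, so the two trajectories coincide at time $k$.

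The main obstacle is the bookkeeping behind (i): one must verify that the flag being set at time $k$ keeps the inputs applied by the consistent actuator aligned with those assumed by the remote estimator across the entire window, even though $s_k$ and $q_k$ evolve dynamically and packets before $q_k$ may have been dropped. I expect to discharge this by a one-step induction over $\{q_k,\dots,k\}$ that propagates $\hat{x}(i|i-1)=\xnom(i)$, using the identities $s_i=i$ and $\unom(i)=\mathbf{u}_i^*(0)$ that the consistency chain enforces, with the base case read off from the correction step at $q_k$. Once (i) is in hand, its combination with the tube inclusion (ii) is immediate.
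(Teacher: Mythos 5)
Your proposal is correct and takes essentially the same route as the paper's proof: both decompose the claim into the synchronization fact that $\Theta_k=1$ implies $\hat{x}(k|k-1)=\xnom(k)$ (justified by observing that the closed loop acts on the disturbance-free nominal plant and therefore Proposition~1 of \cite{RemoteMPCoverLossyNetworks} applies) and the tube inclusion $x(k)\in\lbrace\xnom(k)\rbrace\oplus\mrpi$ guaranteed by the ancillary controller \eqref{eq:AncillaryController}. Your extra induction sketch over the window $\lbrace q_k,\ldots,k\rbrace$ (using $q_i=q_k$, hence $\Theta_i=1$ and $s_i=i$, for all intermediate $i$) simply unpacks what the paper delegates to the citation, and your explicit requirement that the error be initialized in $\mrpi$ makes precise a hypothesis the paper leaves implicit from the preliminaries.
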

\begin{proof}
Since the closed-loop system of our proposed approach acts on the disturbance-free nominal plant (see Figure~\ref{fig:RemoteTubeTrackingMPCLayout}), we can use Proposition~1 of \cite{RemoteMPCoverLossyNetworks} to show that if $\Theta_k=1$ then $\hat{x}(k|k-1)=\xnom(k)$. Due the ancillary controller, we know that $x(k)\in \lbrace \xnom(k) \rbrace \oplus \mrpi$ holds.
\end{proof}

This shows that when the estimate is consistent with the nominal plant state, i.e. $\Theta_k=1$, then we know that the plant state is in a tube around the estimated state.

Next, we show recursive feasibility of our proposed remote MPC and that the plant will always evolve in the constraints regardless of the network quality.
\begin{proposition}
\label{prop:RecursiveFeasibility}
Let Assumption~\ref{assum:RemoteTrackingMPC} hold, and assume there exists a $k_0$ such that $\gamma_{k_0-1}=1$, $\theta_{k_0}=1$, ${x(k_0)-\xnom(k_0)\in\mrpi}$, and that the optimization problem \eqref{eq:RemoteRobustTubeTrackingMPC} is feasible. 
If the consistent actuator \eqref{eq:SmartActuator} and the ancillary controller \eqref{eq:AncillaryController} are used, the optimization problem \eqref{eq:RemoteRobustTubeTrackingMPC} is feasible, and $x(k)\in\X$ and $u(k)\in\U$ for all $k\geq k_0$.
\end{proposition}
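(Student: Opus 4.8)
The plan is to exploit the central observation already highlighted in this section: the MPC, the estimator, and the consistent actuator together form a closed loop acting on the \emph{disturbance-free} nominal plant \eqref{eq:NominalPlant}, so that the networked recursive-feasibility machinery of \cite{RemoteMPCoverLossyNetworks} applies verbatim to the nominal quantities, while the ancillary controller \eqref{eq:AncillaryController} absorbs the disturbance through a separate tube argument. Accordingly, I would split the claim into two independent parts: (i) recursive feasibility of \eqref{eq:RemoteRobustTubeTrackingMPC} together with $\xnom(k)\in\Xc$ and $\unom(k)\in\Uc$ for all $k\geq k_0$, and (ii) the tube inclusion $x(k)\in\{\xnom(k)\}\oplus\mrpi$, which upgrades the nominal constraint satisfaction to $x(k)\in\X$ and $u(k)\in\U$.

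First I would verify that the hypotheses at $k_0$ place the nominal loop in the consistent regime required by \cite{RemoteMPCoverLossyNetworks}. From $\gamma_{k_0-1}=1$ and the update $q_{k+1}=\gamma_k k + (1-\gamma_k)q_k$ we get $q_{k_0}=k_0-1$; combined with $\theta_{k_0}=1$ this yields $\Theta_{k_0}=\prod_{i=q_{k_0}+1}^{k_0}\theta_i=\theta_{k_0}=1$, hence $s_{k_0}=k_0$ and, by the preceding proposition, $\hat{x}(k_0|k_0-1)=\xnom(k_0)$. Since \eqref{eq:RemoteRobustTubeTrackingMPC} is assumed feasible at $k_0$, and Assumption~\ref{assum:RemoteTrackingMPC} supplies exactly the ingredients (positive-definite $Q,R,T$, observability of $(Q^{\frac{1}{2}},A)$, a stabilizing $\bar{K}$, and the Lyapunov matrix $P$) under which the terminal ingredients $(P,\Xf)$ provide the standard invariance and cost-decrease properties for the steady-state controller \eqref{eq:nominalsteadystatecontroller}, the usual shifted-trajectory candidate is admissible. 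I would then invoke the recursive-feasibility result of \cite{RemoteMPCoverLossyNetworks}, applied to the nominal plant, to conclude that \eqref{eq:RemoteRobustTubeTrackingMPC} stays feasible for all $k\geq k_0$ and that the inputs selected by the consistent actuator \eqref{eq:SmartActuator} satisfy $\unom(k)\in\Uc$, while the nominal state driven by \eqref{eq:NominalPlant} satisfies $\xnom(k)\in\Xc$, for all $k\geq k_0$.

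For part (ii), I would use that $e(k_0)=x(k_0)-\xnom(k_0)\in\mrpi$ by hypothesis and that, under the ancillary controller \eqref{eq:AncillaryController}, the error obeys $e(k+1)=(A-BK)e(k)+w(k)$. Because $\mrpi$ is robust positively invariant, i.e.\ $(A-BK)\mrpi\oplus\mathbb{W}\subseteq\mrpi$, induction gives $e(k)\in\mrpi$ for all $k\geq k_0$. Combining this with part (i) and the defining properties of the tightened sets, $\Xc\oplus\mrpi\subseteq\X$ and $\Uc\oplus(-K)\mrpi\subseteq\U$, yields
\begin{align*}
x(k)&=\xnom(k)+e(k)\in\Xc\oplus\mrpi\subseteq\X,\\
u(k)&=\unom(k)-Ke(k)\in\Uc\oplus(-K)\mrpi\subseteq\U,
\end{align*}
for all $k\geq k_0$, which is the desired conclusion.

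The hard part will be the rigorous transfer in the first step: one must argue that every network contingency — a freshly received consistent packet, a discarded inconsistent packet, or no packet at all — still leaves a feasible candidate for \eqref{eq:RemoteRobustTubeTrackingMPC}, and that this is precisely what the consistency bookkeeping ($\Theta_k$, $s_k$, $q_k$) and the packet content \eqref{eq:Uk} were engineered to guarantee. The clean resolution is that, since the nominal loop carries no disturbance, the estimated and applied nominal trajectories never drift apart once $\Theta_k=1$, so the situation is bit-for-bit the disturbance-free networked problem already solved in \cite{RemoteMPCoverLossyNetworks}; the only genuinely new element, the disturbance, has been quarantined into the tube argument of part (ii) and does not interact with feasibility at all.
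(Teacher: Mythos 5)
Your proposal is correct and follows essentially the same route as the paper's own proof: invoke the recursive-feasibility result (Proposition~2) of \cite{RemoteMPCoverLossyNetworks} on the disturbance-free nominal closed loop to obtain feasibility together with $\xnom(k)\in\Xc$ and $\unom(k)\in\Uc$ for all $k\geq k_0$, then use the robust positive invariance of $\mrpi$ under the ancillary controller \eqref{eq:AncillaryController} and the tightened-set inclusions $\Xc\oplus\mrpi\subseteq\X$, $\Uc\oplus(-K)\mrpi\subseteq\U$ to lift this to $x(k)\in\X$ and $u(k)\in\U$. Your write-up is simply more explicit than the paper's (verifying the consistency bookkeeping at $k_0$ and carrying out the tube induction), but the decomposition and the key steps coincide.
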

\begin{proof}
Given the conditions above, Proposition~2 of \cite{RemoteMPCoverLossyNetworks} shows us that optimization problem \eqref{eq:RemoteRobustTubeTrackingMPC} is feasible and $\xnom(k)\in\Xc$ and $\unom(k)\in\Uc$ for all $k\geq k_0$. 
The constraint satisfaction of $x(k) \in \X$ and $u(k) \in \U$ is guaranteed since the ancillary controller \eqref{eq:AncillaryController} guarantees that $x(k)\in\lbrace \xnom(k) \rbrace \oplus \mrpi\subseteq\X$ and $u(k)\in\lbrace \unom(k) \rbrace \oplus (-K)\mrpi\subseteq\U$.
\end{proof}

Note that the feasibility of the MPC does not depend on the value of $x_r$, such that for all reference values our solution is recursively feasible according to Proposition~\ref{prop:RecursiveFeasibility}.

Finally, the following theorem states the tracking capabilities of our approach given a constant reference $x_r$.
\begin{theorem}
\label{thm:Tracking}
Let Assumption~\ref{assum:RemoteTrackingMPC} hold and $[x_r^\top,\ \tilde{u}^\top]^\top$ fulfil the steady-state equation \eqref{eq:SteadyStateEquation}.
If the consistent actuator \eqref{eq:SmartActuator} and the ancillary controller \eqref{eq:AncillaryController} are used, then almost surely $\lim_{k\rightarrow\infty}x(k)\in \lbrace \tilde{x}_r \rbrace \oplus \X$, where $\tilde{x}_r=x_r$ if $x_r\in\lambda\Xc$ and $\tilde{u}\in \lambda\Uc$, and $\tilde{x}_r=\arg\min_{x\in \lambda\Xc}\|x-x_r\|_T^2$ otherwise.
\end{theorem}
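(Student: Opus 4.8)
The plan is to mirror the reduction already used in Proposition~1 and Proposition~\ref{prop:RecursiveFeasibility}: the MPC, estimator, and consistent actuator act entirely on the disturbance-free nominal plant \eqref{eq:NominalPlant}, so the nominal closed loop reduces exactly to the disturbance-free, lossy-network setting analysed in \cite{RemoteMPCoverLossyNetworks}, while the disturbance and the true plant state enter only through the ancillary controller \eqref{eq:AncillaryController}, which I would treat as a final separate layer. Concretely I would split the argument into three steps: (i) almost-sure convergence of the nominal state $\xnom(k)$ to the optimal artificial steady state produced by the MPC, (ii) identification of that steady state with $\tilde{x}_r$, and (iii) inflating the nominal limit by the tube $\mrpi$ to recover the statement for $x(k)$.

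For step (i), Assumption~\ref{assum:RemoteTrackingMPC} ensures the cost matrices and terminal ingredients satisfy the standing hypotheses of \cite{RemoteMPCoverLossyNetworks}, and Assumption~\ref{assum:Network} guarantees infinitely many pairs of consecutive successful transmissions in each direction. Invoking the convergence result of \cite{RemoteMPCoverLossyNetworks} on the nominal loop then yields, almost surely, that $\xnom(k)$ converges to the optimal steady state $\bar{x}^{\ast}$ selected by \eqref{eq:RemoteRobustTubeTrackingMPC}. The ``almost surely'' is inherited directly from Assumption~\ref{assum:Network}, since on the zero-probability event where consecutive deliveries eventually cease no guarantee is claimed.

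For step (ii), I would use the tracking-MPC characterisation of \cite{RobustTubeBasedTrackingMPC}: because the cost contains the offset term $\|\bar{x}-x_r\|_T^2$ and the artificial pair $(\bar{x},\bar{u})$ is a free decision variable constrained only to be an admissible steady state (through \eqref{eq:SteadyStateEquation} and the terminal set $\Xf$), the optimiser drives $\bar{x}$ to the admissible steady state closest to $x_r$ in the $T$-weighted norm. This is exactly $\tilde{x}_r=x_r$ when $(x_r,\tilde{u})$ is itself strictly admissible, i.e. $x_r\in\lambda\Xc$ and $\tilde{u}\in\lambda\Uc$, and otherwise the projection $\tilde{x}_r=\arg\min_{x\in\lambda\Xc}\|x-x_r\|_T^2$. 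Combining (i) and (ii) gives $\xnom(k)\to\tilde{x}_r$ almost surely. Finally, for step (iii) I would invoke the tube inclusion of Section~\ref{sec:Preliminaries}: the ancillary controller keeps $x(k)\in\{\xnom(k)\}\oplus\mrpi$ for all $k$, so passing to the limit yields the stated convergence of $x(k)$ to the bounded neighbourhood $\{\tilde{x}_r\}\oplus\mrpi$ of the reference.

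The main obstacle I expect is step (ii): one must verify that the steady-state characterisation of the centralised tracking MPC of \cite{RobustTubeBasedTrackingMPC} survives the lossy, estimator-driven implementation, i.e. that the limiting artificial reference is unaffected by packet losses. I would argue this holds because, by step (i), the consistent-actuator mechanism \eqref{eq:SmartActuator} guarantees that the steady-state data $(\bar{x}_{s_k}^{\ast},\bar{u}_{s_k}^{\ast})$ actually applied on the nominal plant agrees in the limit with the optimiser of the current problem, so the offset term $\|\bar{x}-x_r\|_T^2$ governs the asymptotic artificial reference exactly as in the disturbance-free, lossless case; the positive definiteness of $T$ in Assumption~\ref{assum:RemoteTrackingMPC} then makes the limiting minimiser unique and equal to $\tilde{x}_r$.
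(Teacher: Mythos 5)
Your proposal is correct and takes essentially the same route as the paper's own proof: convergence of the nominal state via the almost-sure tracking result of \cite{RemoteMPCoverLossyNetworks} (their Proposition~3) applied to the disturbance-free nominal loop, identification of the limit with $\tilde{x}_r$ via Theorem~1 of \cite{RobustTubeBasedTrackingMPC}, and transfer to the true state $x(k)$ through the ancillary-controller tube. In fact your step (iii) delivers the tighter conclusion $\lim_{k\rightarrow\infty}x(k)\in\lbrace\tilde{x}_r\rbrace\oplus\mrpi$, which immediately implies the stated inclusion in $\lbrace\tilde{x}_r\rbrace\oplus\X$.
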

\begin{proof}
From Proposition~3 of \cite{RemoteMPCoverLossyNetworks} we obtain that $\lim_{k\rightarrow\infty}\xnom(k) =  \tilde{x}_r $ almost surely, while Theorem~1 of \cite{RobustTubeBasedTrackingMPC} states that $\tilde{x}_r=\arg\min_{x\in \lambda\Xc}\|x-x_r\|_T^2$ such that $\tilde{x}_r = x_r$ if $x_r\in\lambda\Xc$ and $\tilde{u}\in \lambda\Uc$.
The ancillary controller guarantees that $\lim_{k\rightarrow\infty}x(k)\in \lbrace \tilde{x}_r \rbrace \oplus \X$ almost surely.
\end{proof}
\begin{corollary}
Theorem~\ref{thm:Tracking} and Proposition~\ref{prop:RecursiveFeasibility} show us that by choosing $f(\cdot)$ and $g(\cdot)$ as in our approach, we have solved Problem~\ref{prob:TrackingOverLossyNetworkWithLocalDisturbance} for constant references.
\end{corollary}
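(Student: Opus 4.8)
The plan is to read the Corollary as a bookkeeping step that composes the two results already proven and checks them against the two clauses of Problem~\ref{prob:TrackingOverLossyNetworkWithLocalDisturbance}. Clause i), pointwise satisfaction of $x(k)\in\X$ and $u(k)\in\U$, is exactly the constraint-satisfaction conclusion of Proposition~\ref{prop:RecursiveFeasibility}; clause ii), convergence of $x(k)$ to a bounded neighbourhood of the constant reference $x_r$, is the content of Theorem~\ref{thm:Tracking}. Since both hold under Assumption~\ref{assum:RemoteTrackingMPC} and for the specific $f(\cdot)$ and $g(\cdot)$ defined by the consistent actuator \eqref{eq:SmartActuator}, the nominal plant, the ancillary controller \eqref{eq:AncillaryController}, the remote MPC \eqref{eq:RemoteRobustTubeTrackingMPC}, and the estimator, the proof amounts to verifying that the hypotheses of those two results are met and then stitching their conclusions together.

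First I would fix the initial trigger. The error dynamics $e(k+1)=(A-BK)e(k)+w(k)$ do not depend on which packet the consistent actuator applies, so if $e(0)=x(0)-\xnom(0)\in\mrpi$ then, by positive invariance of $\mrpi$, $x(k)-\xnom(k)\in\mrpi$ for every $k$; in particular the invariance condition required at $k_0$ in Proposition~\ref{prop:RecursiveFeasibility} holds automatically. Assumption~\ref{assum:Network} then guarantees that, almost surely, there exists an index $k_0$ with $\gamma_{k_0-1}=1$ and $\theta_{k_0}=1$, so that a consistent packet is delivered and \eqref{eq:RemoteRobustTubeTrackingMPC} is feasible at $k_0$. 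Applying Proposition~\ref{prop:RecursiveFeasibility} from this $k_0$ yields clause i) for all $k\geq k_0$, while Theorem~\ref{thm:Tracking} yields clause ii): almost surely $x(k)$ converges to a bounded neighbourhood of $\tilde{x}_r$ --- the translate of the compact tube $\mrpi$ --- which coincides with a neighbourhood of $x_r$ whenever $x_r\in\lambda\Xc$ and $\tilde u\in\lambda\Uc$.

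The main obstacle is reconciling the time ranges and the probabilistic qualifier rather than any hard estimate. Problem~\ref{prob:TrackingOverLossyNetworkWithLocalDisturbance} is phrased for all $k\geq 0$, whereas Proposition~\ref{prop:RecursiveFeasibility} only certifies the constraints from the first consistent reception $k_0$ onward; I would therefore either state the constraint guarantee from $k_0$, or add the standing assumption that the loop is initialized feasibly, e.g. $x(0)\in\lbrace\xnom(0)\rbrace\oplus\mrpi$ with $\xnom(0)\in\Xc$, so that $k_0=0$ is admissible. The ``almost surely'' qualifier in both the existence of $k_0$ and in Theorem~\ref{thm:Tracking} must be carried through, since nothing deterministic can be claimed under Assumption~\ref{assum:Network}. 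Finally, I would be explicit that when $x_r$ is not admissible the target degrades gracefully to $\tilde{x}_r=\arg\min_{x\in\lambda\Xc}\|x-x_r\|_T^2$, so that clause ii) is still met, now in the relaxed sense of convergence to a bounded neighbourhood of the closest admissible steady state.
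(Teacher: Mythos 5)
Your proposal is correct and follows essentially the same route as the paper, which offers no explicit proof for this corollary precisely because it is the straightforward composition you describe: Proposition~\ref{prop:RecursiveFeasibility} supplies clause i) of Problem~\ref{prob:TrackingOverLossyNetworkWithLocalDisturbance} (constraint satisfaction) and Theorem~\ref{thm:Tracking} supplies clause ii) (almost-sure convergence to a bounded neighbourhood of the reference, degrading to the closest admissible steady state when $x_r\notin\lambda\Xc$). Your added bookkeeping --- the invariance of $e(k)\in\mrpi$ independent of which packet is applied, the almost-sure existence of $k_0$ under Assumption~\ref{assum:Network}, and the $k\geq k_0$ versus $k\geq 0$ caveat --- is consistent with the paper's standing hypotheses; the only slight imprecision is implying that feasibility of \eqref{eq:RemoteRobustTubeTrackingMPC} at $k_0$ follows from the packet delivery itself, whereas it is an assumption in Proposition~\ref{prop:RecursiveFeasibility}.
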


\subsection{Extension to include state feedback}
\label{sec:ExtendedMPCApproach}
While our proposed approach does not require feedback from $x(k)$, it is common to send the state also to the remote controller, for example, for anomaly detection purposes.
Therefore, we will now propose an extension to our approach, which includes state feedback, while inheriting the theoretical guarantees of our previously described approach.

To include the state, we change the content of the plant packet \eqref{eq:Xk} as follows
\begin{equation}
    \label{eq:XkWithState}
    X_k = \lbrace x(k),\xnom(k), s_k \rbrace.
\end{equation}
With the new package \eqref{eq:XkWithState}, the estimator in \eqref{eq:Estimator} uses
\begin{align}
    \hat{x}(k|k) &= \gamma_k x(k) + (1-\gamma_k)\mathbf{x}_k^*(0),\\
    \hat{u}(k|k) &= \gamma_k u(k) + (1-\gamma_k)\mathbf{u}_k^*(0).
\end{align}
Note that we use the state $x(k)$ and control input $u(k)$, when $\gamma_k=1$, where $u(k)$ can be calculated according to \eqref{eq:AncillaryController}. 
This leads to $x(k+1)\subseteq\hat{x}(k+1|k)\oplus \mathbb{W}$, which gives us a better estimate than with the estimator of Section~\ref{sec:Estimator}, where ${x(k+1)\subseteq\hat{x}(k+1|k)\oplus \mrpi}$.
Otherwise, the estimator will use the last optimal trajectory of the MPC to estimate the next state, which gives us again an estimate of the nominal plant.
However, this new estimate does not guarantee that $\hat{x}(k+1|k)\in\Xc$ when $\gamma_k=1$, which requires us to change the constraint \eqref{eq:MPCInitCondition} in our MPC described in Section~\ref{sec:ProposedRemoteMPC} to guarantee feasibility.
Thus, we replace constraint \eqref{eq:MPCInitCondition} with
\begin{align}
    \label{eq:NewMPCInitCondition}
    \lbrace\hat{x}(k-1|k)\rbrace \oplus \mathbb{W}\subseteq \lbrace \mathbf{x}_k(0)\rbrace\oplus \mrpi,
\end{align}
when $\gamma_{k-1}=1$ and otherwise we keep \eqref{eq:MPCInitCondition}.
Hence, the MPC algorithm is now made aware if packets have been received.
Furthermore, the constraint \eqref{eq:NewMPCInitCondition} allows the MPC to reset the nominal state trajectory, since now it is not necessarily true that $\mathbf{x}_k(0)=\hat{x}(k|k-1)$ as it is the case for \eqref{eq:MPCInitCondition}. 
This can improve the convergence as discussed in Chapter 3.5 of \cite{MPCBook}.

Since the MPC can change the optimal trajectory of the nominal plant, we need to update the trajectory on the nominal plant if a consistent packet has been received. 
This is done by changing the controller packet \eqref{eq:Uk} to
\begin{equation}
    \label{eq:UkWithNomState}
    U_k = \lbrace \mathbf{u}_k^*, \bar{u}_k^*+\bar{K}\bar{x}_k^*,\mathbf{x}_k^*(0), q_k\rbrace,
\end{equation}
and setting $\xnom(k)=\mathbf{x}_k^*(0)$ if $\Theta_k=1$.

\begin{proposition}
\label{prop:RecursiveFeasibilityExtendedMPC}
Let Assumption~\ref{assum:RemoteTrackingMPC} hold, and assume there exists a $k_0$ such that $\gamma_{k_0-1}=1$, $\theta_{k_0}=1$, ${x(k_0)-\xnom(k_0)\in\mrpi}$, and that the optimization problem \eqref{eq:RemoteRobustTubeTrackingMPC} is feasible with the new constraint \eqref{eq:NewMPCInitCondition}. 
If the consistent actuator \eqref{eq:SmartActuator} with the nominal state update and the ancillary controller \eqref{eq:AncillaryController} are used, the optimization problem \eqref{eq:RemoteRobustTubeTrackingMPC} with the new constraint \eqref{eq:NewMPCInitCondition}  is feasible, and $x(k)\in\X$ and $u(k)\in\U$ for all $k\geq k_0$.
\end{proposition}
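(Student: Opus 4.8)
The plan is to mirror the proof of Proposition~\ref{prop:RecursiveFeasibility}, since the only changes introduced by the state-feedback extension are the relaxed initial constraint \eqref{eq:NewMPCInitCondition} and the nominal-state reset $\xnom(k)=\mathbf{x}_k^*(0)$ triggered by $\Theta_k=1$. As before, the loop formed by the MPC, the consistent actuator and the estimator still acts on the disturbance-free nominal plant \eqref{eq:NominalPlant}, so I would again invoke Proposition~2 of \cite{RemoteMPCoverLossyNetworks} to inherit recursive feasibility and the bookkeeping that keeps the nominal state consistent, together with $\xnom(k)\in\Xc$ and $\unom(k)\in\Uc$ for all $k\geq k_0$. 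The task then reduces to checking that the two modifications break neither the feasibility argument nor the tube inclusion $x(k)\in\{\xnom(k)\}\oplus\mrpi$.

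First I would re-establish the tube inclusion, distinguishing on whether fresh true-state feedback is available. When $\gamma_{k-1}=0$ the extended estimator reduces to the nominal propagation of Section~\ref{sec:Estimator}, and a reset $\xnom(k)=\mathbf{x}_k^*(0)$ only re-assigns the nominal state to its own predicted value; the error then obeys $e(k+1)=(A-BK)e(k)+w(k)$ and $e(k)\in\mrpi$ follows from the invariance $(A-BK)\mrpi\oplus\mathbb{W}\subseteq\mrpi$, exactly as in Proposition~\ref{prop:RecursiveFeasibility}. When $\gamma_{k-1}=1$ the estimator instead yields $x(k)\in\{\hat{x}(k|k-1)\}\oplus\mathbb{W}$, and whenever a consistent packet is applied ($\Theta_k=1$) constraint \eqref{eq:NewMPCInitCondition} forces $\{\hat{x}(k|k-1)\}\oplus\mathbb{W}\subseteq\{\mathbf{x}_k^*(0)\}\oplus\mrpi$, so after the reset $\xnom(k)=\mathbf{x}_k^*(0)$ we again obtain $x(k)-\xnom(k)\in\mrpi$. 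In every case the tube holds, and combined with $\xnom(k)\in\Xc$, $\unom(k)\in\Uc$ the ancillary controller \eqref{eq:AncillaryController} yields $x(k)\in\{\xnom(k)\}\oplus\mrpi\subseteq\X$ and $u(k)\in\{\unom(k)\}\oplus(-K)\mrpi\subseteq\U$.

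For recursive feasibility I would exhibit the usual shifted candidate. If $\gamma_{k-1}=0$ the initial constraint reverts to \eqref{eq:MPCInitCondition} and the time-shifted previous optimum, extended by the terminal controller $\bar{K}$, is feasible precisely as in \cite{RemoteMPCoverLossyNetworks}. If $\gamma_{k-1}=1$ I would verify that the shifted nominal state $\mathbf{x}_{k-1}^*(1)=A\mathbf{x}_{k-1}^*(0)+B\mathbf{u}_{k-1}^*(0)$ satisfies the new constraint \eqref{eq:NewMPCInitCondition}: writing $\hat{x}(k|k-1)=\xnom(k)+(A-BK)e(k-1)$ with $e(k-1)\in\mrpi$ gives $\{\hat{x}(k|k-1)\}\oplus\mathbb{W}=\{\xnom(k)\}\oplus\left(\{(A-BK)e(k-1)\}\oplus\mathbb{W}\right)\subseteq\{\xnom(k)\}\oplus\mrpi$, again by $(A-BK)\mrpi\oplus\mathbb{W}\subseteq\mrpi$. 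The remaining state, input and terminal constraints are then met by the same shifted trajectory, closing the induction.

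The main obstacle I anticipate is the reset step: guaranteeing simultaneously that replacing the locally propagated nominal state by $\mathbf{x}_k^*(0)$ keeps the true state inside the tube and that \eqref{eq:NewMPCInitCondition} is always attainable by a natural candidate. Both requirements hinge on the same two facts — that the extended estimator confines $x(k)$ to a $\mathbb{W}$-neighbourhood of the predicted estimate, and that $\mrpi$ absorbs $(A-BK)\mrpi\oplus\mathbb{W}$ — so the crux of the argument is to show that constraint \eqref{eq:NewMPCInitCondition} is designed exactly to make these two requirements coincide.
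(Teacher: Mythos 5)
Your proposal is correct and follows essentially the same route as the paper's proof: the same case split on whether state feedback was received, the same key inclusion $\hat{x}(k|k-1)\in\lbrace\xnom(k)\rbrace\oplus(A-BK)\mrpi$ combined with $(A-BK)\mrpi\oplus\mathbb{W}\subseteq\mrpi$ to show that $\mathbf{x}_k(0)=\xnom(k)$ satisfies \eqref{eq:NewMPCInitCondition}, and the same observation that the reset preserves the tube so the ancillary controller yields $x(k)\in\X$, $u(k)\in\U$. The only cosmetic difference is that you spell out the shifted-candidate induction explicitly, whereas the paper shortcuts by noting the original MPC's optimal solution is feasible for the extended problem and then invoking Proposition~\ref{prop:RecursiveFeasibility}.
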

\begin{proof}
If $\gamma_k=0$, the problem is feasible, since the nominal state is used in the estimator. 
If $\gamma_{k}=1$ we can show that
\begin{align}
\hat{x}(k|k-1)\in \lbrace\xnom(k)\rbrace\oplus(A-BK)\mrpi.
\end{align} 
holds. This leads to
\begin{align}
x(k+1)&\in\lbrace\hat{x}(k|k-1)\rbrace \oplus \mathbb{W}
\subseteq \lbrace\xnom(k)\rbrace \oplus \mrpi.
\end{align}
Hence, the constraints $\lbrace\hat{x}(k|k-1)\rbrace \oplus \mathbb{W}\subseteq \lbrace\mathbf{x}_k(0)\rbrace \oplus \mrpi$ and $\mathbf{x}_k(0)\in\Xc$ are feasible with the choice of ${\mathbf{x}_k(0)=\xnom(k)}$.
So the optimal solution of our original MPC \eqref{eq:RemoteRobustTubeTrackingMPC} is a feasible solution of the extended MPC with constraint \eqref{eq:NewMPCInitCondition}. 
Thus, the extended MPC with state feedback is recursively feasible for all $k\geq k_0$, since the original MPC is recursive feasible as shown in Proposition~\ref{prop:RecursiveFeasibility}.
Since $\mathbf{x}_k^*(0)\in\Xc$, the nominal state update, when $\Theta_k=1$, will not change the guarantees given by the ancillary controller, such that $x(k)\in\X$ and $u(k)\in\U$ for all $k\geq k_0$.
\end{proof}
\begin{corollary}
The tracking guarantees of Theorem~\ref{thm:Tracking} hold for the extended MPC with state feedback as well.
\end{corollary}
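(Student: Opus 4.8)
The plan is to follow the template of the proof of Theorem~\ref{thm:Tracking} and reduce the claim to two ingredients: (i) the nominal state still converges almost surely, $\lim_{k\to\infty}\xnom(k)=\tilde{x}_r$, and (ii) the ancillary controller \eqref{eq:AncillaryController} still confines the true state to the tube $\{\xnom(k)\}\oplus\mrpi$. Ingredient (ii) is already delivered by Proposition~\ref{prop:RecursiveFeasibilityExtendedMPC}, which guarantees $x(k)\in\{\xnom(k)\}\oplus\mrpi$ for all $k\geq k_0$, so the entire burden falls on re-establishing (i) for the extended scheme. Once (i) holds, the final step $\lim_{k\to\infty}x(k)\in\{\tilde{x}_r\}\oplus\X$ follows verbatim from the argument already used in Theorem~\ref{thm:Tracking}, and the characterization of $\tilde{x}_r$ via Theorem~1 of \cite{RobustTubeBasedTrackingMPC} carries over unchanged.

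To obtain (i), I would argue that the extended MPC \eqref{eq:RemoteRobustTubeTrackingMPC} with constraint \eqref{eq:NewMPCInitCondition}, together with the modified estimator and the consistent actuator \eqref{eq:SmartActuator} with the nominal-state update, still constitutes exactly the disturbance-free remote tracking scheme of \cite{RemoteMPCoverLossyNetworks}, so that its Proposition~3 applies. Concretely, I would verify that each modification leaves the hypotheses of that proposition intact: upon a consistent packet the reset $\xnom(k)=\mathbf{x}_k^*(0)$ keeps $\xnom(k)\in\Xc$ because $\mathbf{x}_k^*(0)\in\Xc$ by feasibility; the relaxed constraint \eqref{eq:NewMPCInitCondition} only enlarges the feasible set relative to \eqref{eq:MPCInitCondition}, so it cannot destroy the recursive feasibility already shown in Proposition~\ref{prop:RecursiveFeasibilityExtendedMPC}; and the closed loop still contains no disturbance, since the MPC acts on the nominal plant as in Figure~\ref{fig:RemoteTubeTrackingMPCLayout}.

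The main obstacle is the relaxed constraint \eqref{eq:NewMPCInitCondition}, which decouples $\mathbf{x}_k(0)$ from $\hat{x}(k|k-1)$ and therefore permits the MPC to reset the nominal trajectory, whereas the convergence proof of \cite{RemoteMPCoverLossyNetworks} rests on the optimal cost serving as a Lyapunov-type certificate that does not increase along the closed-loop nominal trajectory. Such resets must be shown not to break this certificate. I would handle this by noting that resets occur only when a consistent packet is received, i.e. when $\Theta_k=1$, and that the reset value $\mathbf{x}_k^*(0)$ is itself a minimizer of the same cost over a feasible set that contains the previously planned initial state; hence the optimal cost cannot increase across a reset. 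Combined with Assumption~\ref{assum:Network}, which forces consistent packets to recur infinitely often, this preserves the almost-sure convergence $\lim_{k\to\infty}\xnom(k)=\tilde{x}_r$, completing ingredient (i) and thereby the corollary.
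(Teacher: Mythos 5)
Your proposal is correct and follows essentially the same route as the paper: the paper's proof likewise reduces the claim to Proposition~\ref{prop:RecursiveFeasibilityExtendedMPC}'s fact that the original MPC's optimal solution remains feasible for the extended MPC with constraint \eqref{eq:NewMPCInitCondition}, so the cost-based convergence argument behind Theorem~\ref{thm:Tracking} carries over unchanged. The paper states this in two sentences and leaves implicit the points you spell out (the reset preserving the cost certificate, the tube argument via the ancillary controller), so your version is simply a more detailed rendering of the same proof.
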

\begin{proof}
Since Proposition~\ref{prop:RecursiveFeasibilityExtendedMPC} shows that the solution of the original MPC is a feasible solution of the extended MPC, we can deduce that the tracking guarantees of the original MPC also hold for the extended MPC. 
\end{proof}
In summary, this extension includes state feedback from the plant, which can change the optimal trajectory of the nominal plant to improve performance as well with the same theoretical guarantees of the previous approach. However, this approach requires more bandwidth and might change the execution times of the MPC.
\section{Numerical Examples}
\label{sec:simulations}

To demonstrate the efficacy of our proposed approach, henceforth called \emph{\RTMPC{}} and \emph{\ERTMPC{}} for the extended version with state feedback (see Section~\ref{sec:ExtendedMPCApproach}), we use it to track a position reference of a cartpole system, where the pole is in the upright unstable configuration. 
We compare our approach with the approach of \cite{RemoteMPCoverLossyNetworks}, subsequently called \RMPC{}. 
Scripts to reproduce the results presented are included in our open-source code.

In order to design our nominal plant, we linearize the nonlinear dynamics around the unstable equilibrium point, where the pole is pointing up. The resulting continuous-time matrices are defined as follows for the state $x~=~\begin{bmatrix} p & \dot{p} & \phi & \dot{\phi} \end{bmatrix}^\top$:
\begin{equation*}
    A_c = \begin{bmatrix}
        0 & 1 & 0 & 0\\
        0 & \frac{-(I+ml^2)b}{r} & \frac{-m^2gl^2}{r} & 0\\
        0 & 0 & 0 & 1\\
        0 & \frac{-(mlb)}{r} & \frac{mgl(M+m)}{r} & 0
    \end{bmatrix}
    \;
    B_c = \begin{bmatrix}
        0\\
        \frac{I+ml^2}{r}\\
        0\\
        \frac{-ml}{r}
    \end{bmatrix},
\end{equation*}
where $p$ is the position of the cart, $\phi$ the angle of the pole, ${r = I(M+m)+Mml^2}$, with the remaining parameters and their values defined in Table~\ref{tab:parameters}. 
The system is then discretized with a zero-order hold and a sampling time of $T_s=\SI{20}{\milli\second}$ in order to obtain \eqref{eq:NominalPlant}. 
The controllers $K$ and $\bar{K}$ are designed as a discrete LQR controller with cost matrices ${Q=\mathrm{diag}(100, 10, 100, 10)}$ and $R=0.1$.
Furthermore, we choose $|p|\leq \SI{5}{\meter}$, $|\dot{p}|\leq \SI{5}{\meter/\second}$, $|\phi|\leq \SI{0.3}{\radian}$, $|\dot{\phi}|\leq \SI{2}{\radian/\second}$, and $|u|\leq \SI{10}\newton$ to define $\X$ and $\U$. 
The constraints on $\phi$ and $\dot{\phi}$ guarantee that the LQR controller stabilizes the system.
Finally, we choose $N=20$ as the horizon for the MPC.

\subsection{Disturbance set $\mathbb{W}$}
The linearized model will inherently differ from the nonlinear one, and such model error will be represented as the disturbance $w(k)$. 
To estimate the set $\mathbb{W}$, we run several simulations with randomly chosen initial conditions, and let the LQR controller bring the system back to the origin.
The disturbance is then estimated as the difference between the actual state and the linear model, i.e.
$w(k) = x(k+1)-(A-BK)x(k)$. 
This results in the following bounds for the disturbance of the position $|w_p|\leq \SI{0.0001}{\meter}$, velocity $|w_{\dot{p}}|\leq \SI{0.0027}{\meter/\second}$, angle $|w_{\phi}|\leq \SI{0.0003}{\radian}$, and angular velocity $|w_{\dot{\phi}}|\leq \SI{0.043}{\radian/\second}$. 
To approximate $\mrpi$ we use a method described in \cite{RPI_Darup}.

\begin{table}
    \vspace{2mm}
    \centering
    \caption{Parameters used in the numerical examples.}
    \label{tab:parameters}
    \begin{tabular}{ccc}\toprule
         & Definition & Value\\\midrule
        $I$ & Pendulum's inertia & $\SI{0.001}{\kilogram\cdot\meter^2}$ \\
        $l$ & Length to pendulum center of mass & \SI{0.5}{\meter}\\
        $m$ & Pendulum's mass & \SI{0.1}{\kilogram}\\
        $M$ & Cart's mass & \SI{1.0}{\kilogram}\\
        $b$ & Cart's coefficient of friction & \SI{0}{\newton/\meter/\second}\\
        $g$ & Gravity acceleration & \SI{9.8}{\meter/\second^2}\\
        $T_s$ & Sampling Time & \SI{0.02}{\second}\\ \bottomrule
    \end{tabular}
\end{table}

\subsection{Reference Tracking}
Next, we present results for the tracking of a constant reference in position $p$. To do so, the cartpole system is always initialized at the origin, and the reference is set to $r(k)=\begin{bmatrix}0.5, 0, 0, 0\end{bmatrix}^\top$.
To evaluate the performance, we use the average tracking error $\frac{1}{T+1}\sum_{i=0}^T\|x(k)-r(k)\|_2$. 
For the lossy network, we assume a constant packet loss probability of $\varrho$ and investigate $\varrho\in\lbrace0, 0.1, \ldots, 0.9 \rbrace$.
In addition to that, we perform $20$ simulations for each value of $\varrho$ and record the average tracking error to get a better insight for different realizations of the lossy network.

\subsubsection{Linear Plant}
\label{sec:SimulationsWithLinearPlant}
We begin by comparison with the plant being simulated with linear dynamics, where the disturbance $w(k)$ is sampled uniformly from the set $\mathbb{W}$ at each time step.
The results are presented as a box plot in Figure~\ref{fig:BoxplotLinearSystem}.

First, note that both \RTMPC{} and \ERTMPC{} outperform \RMPC{} for every packet loss probability investigated.
Second, for a packet loss probability of $\varrho=0.9$, the average tracking error decreases.
The reason for that is that due to the large packet loss the cartpole moves less aggressively than when there is less packet loss. This leads to a smaller tracking error for the velocity, angle, and angular velocity, since their reference values are zero, which lowers the overall tracking error.
Third, during our simulations, we encountered infeasibility issues for \RMPC{}. While \cite{RemoteMPCoverLossyNetworks} proves recursive feasibility for the plant without a disturbance present, the presence of a disturbance in our simulations showed that infeasibility can occur.
Hence, modelling errors can result in infeasible MPC problems for \RMPC{}, which we will encounter again when the nonlinear plant is used.
Comparing \RTMPC{} and \ERTMPC{} we observe that the performance of \ERTMPC{} seems almost constant, while the tracking error for \RTMPC{} increases with the packet loss probability. The ability to reset the nominal trajectory is likely the reason for the constant performance of \ERTMPC{}.

\subsubsection{Nonlinear Plant}
Next, we compare the controllers on the nonlinear cartpole simulated using PyBullet. 
To do so, the physics simulators runs at a higher frequency than the controllers (\SI{500}{\hertz}, to be precise), and a zero-order hold keeps the control input constant between controller updates.

Figure~\ref{fig:Boxplot} shows the box plots of the average tracking error for the different packet loss probabilities. 
We observed that \RMPC{} struggled with infeasibility issues; notably, for $\varrho\in\lbrace 0, 0.1, 0.2, 0.3, 0.4\rbrace$ \RMPC{} is always infeasible in our simulations, and the larger $\varrho$ the less infeasible problems were encountered.  
Therefore, the corresponding box plots only present the results of runs without an infeasible MPC problem. 
Our approach, on the other hand, is recursively feasible for all simulations performed.

The infeasibility issues decreasing with the increase of packet losses for \RMPC{} was a surprising result, since the opposite could sound more logical.
Our intuition for this is that the LQR controller $\bar{K}$ used as the steady state controller is able to handle the nonlinearities of the system better than \RMPC{}, since it uses direct state feedback, while \RMPC{} estimates the next state based on the currently received state.
Hence, the more packet loss there is, the more often the steady-state controller is used, which brings the plant to a state that \RMPC{} can actually handle well.
Our approach, on the other hand, uses the LQR controller both as the steady-state controller in the MPC as well as the tracking controller to track the nominal plant state and, in addition to that, tightens the constraint set of the MPC by taking the propagation of the modelling error into account.
This can be observed in Figure~\ref{fig:Trajectories}, where we present one trajectory of the position and angle at a packet loss of $\varrho=0.4$, and the star marks when the infeasibility occurred in \RMPC{}.
\RMPC{} exhibits an oscillatory behaviour before it becomes infeasible, while \RTMPC{} has a smoother trajectory, which reaches the desired reference. By including actual state feedback in \ERTMPC{} the trajectory becomes even smoother due to the ability to reset the nominal trajectory based on the state $x(k)$.

While our approaches have not shown any infeasibility issues, we noticed that the state is not always in a tube around the nominal state for \ERTMPC{}. 
These violations happened in the beginning of the simulation and then stopped. 
We believe that $\mathbb{W}$ does not capture the differences well in the beginning of the reference tracking which leads to these violations. We did not observe such violations for \RTMPC{}, probably because it is more conservative than \ERTMPC{}.

In general, we observe that our proposed solution outperforms \RMPC{} of \cite{RemoteMPCoverLossyNetworks} for all investigated value of $\varrho$. 
Interestingly, the tracking error seems to peak at $\varrho=0.7$ and then reduces again for \RMPC{} and \RTMPC{}, which is due to the same reason as in the linear case.

\subsection{Execution time of the MPC}
Our simulations run on a 24GB RAM Windows machine with a Ryzen7 8-core CPU. From the 50000 executions of the MPC in Section~\ref{sec:SimulationsWithLinearPlant}, we removed the first execution time, since it represents the cold start of the optimization, and present the histogram of the remaining execution times in Figure~\ref{fig:HistogramExecutionTimes}.
We observe that the majority of the sampling times is below \SI{20}{\milli\second}, which shows that our MPC can run in real-time for the sampling time of \SI{20}{\milli\second}.
Further, the median and the $95\,\%$ quantile of the execution time for \RTMPC{} were \SI{5.00}{\milli\second} and \SI{7.85}{\milli\second}, respectively. 
The median and the $95\,\%$ quantile of the execution time for \ERTMPC{} were \SI{6.21}{\milli\second} and \SI{7.06}{\milli\second}, respectively. 
The histogram for \ERTMPC{} has two peaks because it solves two different MPC problems depending on if a measurement was received or not.
While real-time execution is not considered here, an optimization problem that is not solved in time can be interpreted as a lost packet in a real scenario. Hence, our approach can deal with too long execution times of the MPC as well.

\begin{figure}
    \centering
    \subfloat[Linear plant]{
        \includegraphics[trim=8 4 45 10,clip,width=0.8\linewidth]{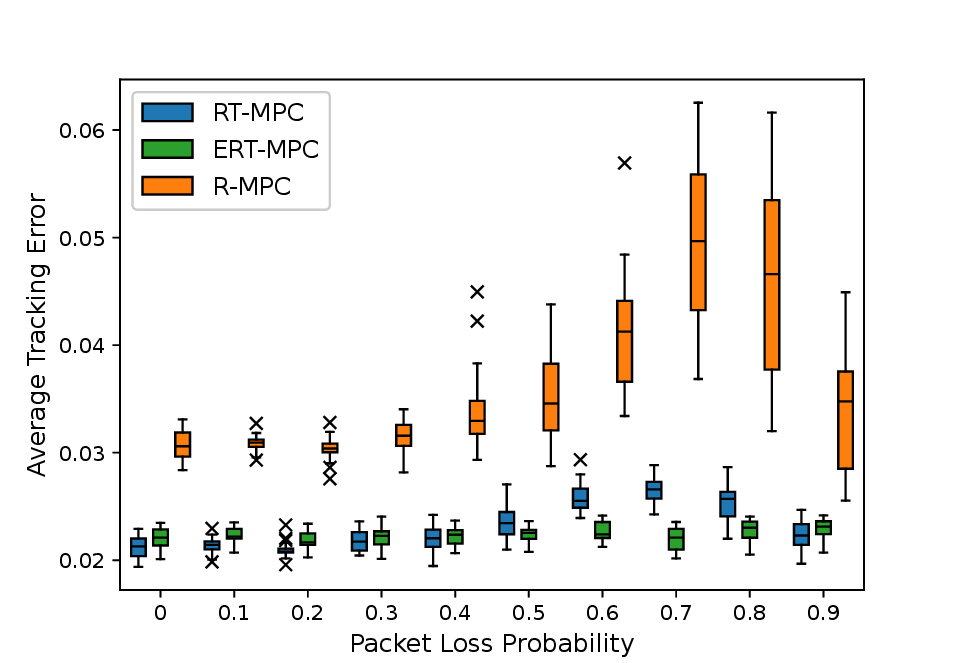}
        \label{fig:BoxplotLinearSystem}
    }
    \\
    \subfloat[Nonlinear plant]{
        \includegraphics[trim=8 4 45 10,clip,width=0.8\linewidth]{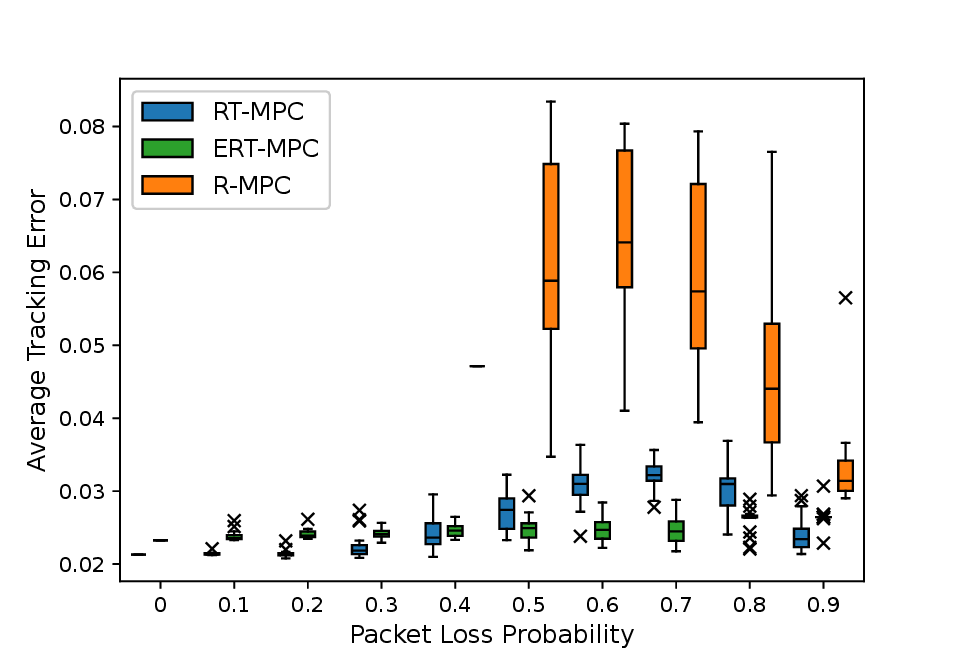}
        \label{fig:Boxplot}
    }
    \\
    \subfloat[Sample trajectory]{
        \includegraphics[trim=10 2 10 5,clip,width=0.8\linewidth]{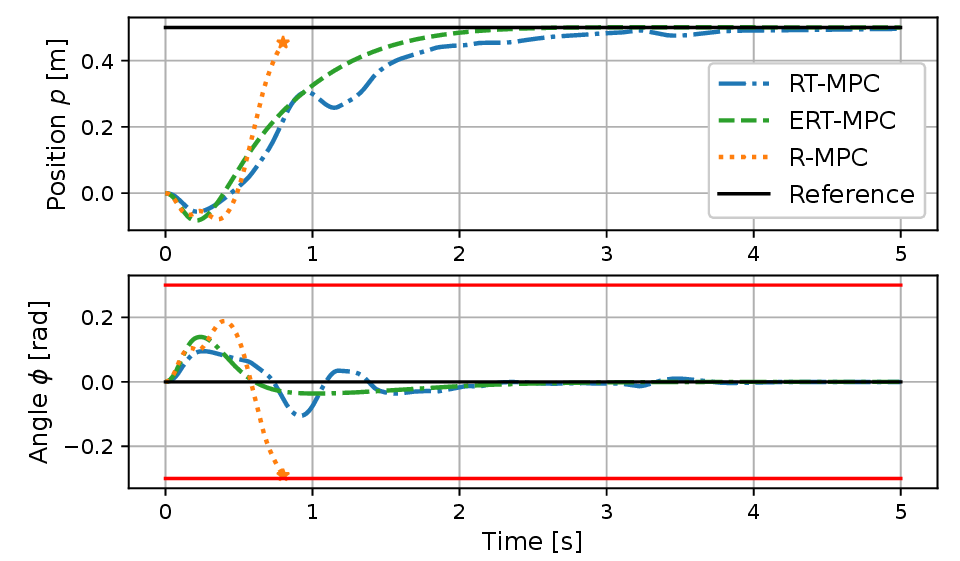}
        \label{fig:Trajectories}
    }
    \\
    \subfloat[MPC execution time]{
        \includegraphics[trim=4 0 46 25,clip,width=0.8\linewidth]{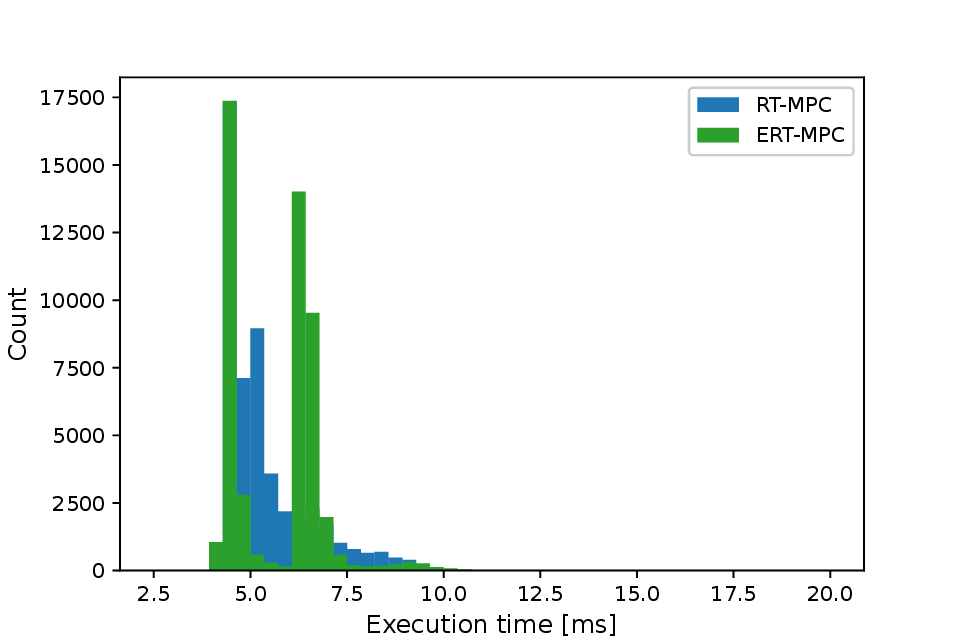}
        \label{fig:HistogramExecutionTimes}
    }
    \caption{Results comparing our approaches, \RTMPC{} and \ERTMPC{}, and \RMPC{} of \cite{RemoteMPCoverLossyNetworks}. In (a) and (b), the average tracking error over 20 runs for different packet loss probabilities is presented. Then, an example trajectory for a packet loss probability of $\varrho=0.4$ is presented in (c), where the star marks the moment when \RMPC{} becomes infeasible. Lastly, (d) shows the histogram of the execution time of the MPC.}
    \label{fig:results}
\end{figure}
\section{Conclusions}
\label{sec:conclusions}

We presented a novel framework that addresses the problem of controlling systems over lossy network connections.
More precisely, we propose a robust tube-based MPC algorithm that allows for the tracking of a piecewise-constant reference signal with guaranteed convergence properties for constant references, recursive feasibility, and safety and input constraint satisfaction.
Further, we presented numerical simulation results of the approach applied to a cartpole system, together with comparisons with state-of-the-art algorithms.
Lastly, our code is available as open-source. 

For future work, we would like to investigate time-varying trajectories and the reasons for the peak of the reference tracking error around a packet loss probability of $80\%$.



\bibliographystyle{IEEEtran}
\bibliography{biblio}

\end{document}